\newcolumntype{N}{@{}m{0pt}@{}}
\newcommand{\minisection}[1]{\vspace{5pt}\noindent\textbf{#1}}
\begin{document}
\title{Replace Scoring with Arrangement: A Contextual Set-to-Arrangement Framework for Learning-to-Rank}
\author{Jiarui Jin$^{*}$}
\affiliation{
\institution{Shanghai Jiao Tong University}
\country{Shanghai, China}
}
\email{jinjiarui97@sjtu.edu.cn}

\author{Xianyu Chen$^{*}$}
\affiliation{
\institution{Shanghai Jiao Tong University}
\country{Shanghai, China}
}
\email{xianyujun@sjtu.edu.cn}

\author{Weinan Zhang}
\affiliation{
\institution{Shanghai Jiao Tong University}
\country{Shanghai, China}
}
\email{wnzhang@sjtu.edu.cn}

\author{Mengyue Yang}
\affiliation{
\institution{University College London}
\country{London, United Kingdom}
}
\email{mengyue.yang.20@ucl.ac.uk}

\author{Yang Wang$^{\dag}$}
\affiliation{
\institution{East China Normal University}
\country{Shanghai, China}
}
\email{ywang@sei.ecnu.edu.cn}

\author{Yali Du}
\affiliation{
\institution{King's College London}
\country{London, United Kingdom}
}
\email{yali.du@kcl.ac.uk}

\author{Yong Yu}
\affiliation{
\institution{Shanghai Jiao Tong University}
\country{Shanghai, China}
}
\email{yyu@sjtu.edu.cn}

\author{Jun Wang}
\affiliation{
\institution{University College London}
\country{London, United Kingdom}
}
\email{jun.wang@ucl.ac.uk}

\renewcommand{\shortauthors}{Jiarui Jin et al.}
\renewcommand{\shorttitle}{STARank}

\begin{CCSXML}
<ccs2012>
   <concept>
       <concept_id>10002951.10003317.10003338.10003343</concept_id>
       <concept_desc>Information systems~Learning to rank</concept_desc>
       <concept_significance>500</concept_significance>
       </concept>
   <concept>
       <concept_id>10002951.10003317.10003331.10003271</concept_id>
       <concept_desc>Information systems~Personalization</concept_desc>
       <concept_significance>500</concept_significance>
       </concept>
 </ccs2012>
\end{CCSXML}

\ccsdesc[500]{Information systems~Learning to rank}
\ccsdesc[500]{Information systems~Personalization}

\settopmatter{printacmref=false}

\begin{abstract}
Learning-to-rank is a core technique in the top-N recommendation task, where an ideal ranker would be a mapping from an item set to an arrangement (a.k.a. permutation).
Most existing solutions fall in the paradigm of probabilistic ranking principle (PRP), i.e., first score each item in the candidate set and then perform a sort operation to generate the top ranking list.
However, these approaches neglect the contextual dependence among candidate items during individual scoring, and the sort operation is non-differentiable.
To bypass the above issues, we propose \underline{\textsf{S}}et-\underline{\textsf{T}}o-\underline{\textsf{A}}rrangement \underline{\textsf{Rank}}ing (\textsf{STARank}), a new framework directly generates the permutations of the candidate items without the need for individually scoring and sort operations; and is end-to-end differentiable. 
As a result, \textsf{STARank} can operate when only the ground-truth permutations are accessible without requiring access to the ground-truth relevance scores for items.
For this purpose, \textsf{STARank} first reads the candidate items in the context of the user browsing history, whose representations are fed into a Plackett-Luce module to arrange the given items into a list.
To effectively utilize the given ground-truth permutations for supervising \textsf{STARank}, we 
leverage the internal consistency property of Plackett-Luce models to derive a computationally efficient list-wise loss.
Experimental comparisons against 9 the state-of-the-art methods on 2 learning-to-rank benchmark datasets and 3 top-N real-world recommendation datasets demonstrate the superiority of \textsf{STARank} in terms of conventional ranking metrics.
Notice that these ranking metrics do not consider the effects of the contextual dependence among the items in the list, we design a new family of simulation-based ranking metrics, where existing metrics can be regarded as special cases. 
\textsf{STARank} can consistently achieve better performance in terms of PBM and UBM simulation-based metrics.
\end{abstract}

\keywords{Learning-to-Rank, Contextual Set-to-Arrangement, Efficient Supervision Generation}
\settopmatter{printacmref=false} 
\maketitle

{
\renewcommand{\thefootnote}{\fnsymbol{footnote}}
\footnotetext[1]{Equal contributions: Jiarui Jin and Xianyu Chen.}
\footnotetext[2]{Correspondence: Yang Wang.}
}

{\fontsize{8pt}{8pt} \selectfont
	\textbf{ACM Reference Format:}\\
	Jiarui Jin, Xianyu Chen, Weinan Zhang, Mengyue Yang, Yang Wang, Yali Du, Yong Yu, Jun Wang. 2023. Replace Scoring with Arrangement: A Contextual Set-to-Arrangement Framework for Learning-to-Rank. In \textit{Proceedings of the 32nd ACM International Conference on Information and Knowledge Management, October 21--25, 2023, Birmingham, United Kingdom}. ACM, New York, NY, USA, 10 pages.
	\url{https://doi.org/10.1145/3583780.3615031}}

\section{Introduction}
Learning-to-rank (LTR) covers a branch of machine learning methods for optimizing ranking performance and is a core technique for search engine and top-N recommendation applications 
\citep{liu2011learning}.
Conventional LTR algorithms are usually designed on the basis of the probability ranking principle (PRP) \citep{robertson1977probability}, which consists of a scoring function that assigns an individual score to each item, and subsequently produces the ranking list by sorting items according to their assigned scores.
Note that the crucial difference between LTR and classical machine learning tasks (e.g., regression, classification) is that in LTR, the exact score of each item is not important, but the relative orders of the items matter.
It is supported by the behavior analysis results on search engines \citep{joachims2017accurately,yilmaz2014relevance}, which manifests that user interactions show strong comparison patterns, namely users usually compare multiple items before generating a click action.

Being aware of PRP's limitation of independently assigning the ranking score for each item and using non-differentiable sort operations \citep{xia2008listwise}, there are roughly two lines of previous researches. 
One direction \citep{ai2018learning,pang2020setrank,bello2018seq2slate} investigates incorporating context into the ranking process and assigning scores to candidate items while taking into account their dependencies within the context.
The other direction \citep{cao2007learning,guiver2009bayesian,taylor2008softrank,xia2008listwise} designs a surrogate differentiable list-wise loss based on permutation probabilities proposed in \citep{plackett1975analysis} to avoid the use of the sort operation.
However, all these methods still fall into a paradigm of first individually scoring items and then putting them into correct positions in descending order of their scores.

In this paper, we propose a new \underline{\textsf{S}}et-\underline{\textsf{T}}o-\underline{\textsf{A}}rrangement \underline{\textsf{Rank}}ing (\textsf{STARank}) framework, which can directly generate the permutations of candidate items without the need for individually scoring and sort operations. 
Hence, our proposed framework is end-to-end differentiable, and can be directly supervised with ground-truth permutations instead of ground-truth scores for all candidate items. 
Concretely, as depicted in Figure~\ref{fig:overview}(a), the ultimate goal is to arrange the candidate items into a permutation close to the ground-truth permeation based on the historical records from users.
For this purpose, we develop a novel read-arrange-supervise paradigm.

Our reader component, as depicted in Figure~\ref{fig:overview}(b), is required to encode two types of input items, namely, the user browsed items and the candidate items.
Their major difference lies in that the user browsed items are characterized by their permutation orders, while the candidate items do not have an inherent permutation order.
Therefore, we first build a permutation-sensitive module to encode the browsed items, whose outputs are further fed into a permutation-invariant module to produce the representation of each candidate item.
Then, our arranger, as shown in Figure~\ref{fig:overview}(c), applies a Plackett-Luce (PL) module to match the candidate items to the remaining positions.

Consider that we only have access to the ground-truth permutations.
We leverage the internal consistency property of the arranger (i.e., PL models) to derive a list-wise loss function, as shown in Figure~\ref{fig:overview}(d).
Compared to existing loss functions based on the ground-truth relevance scores, our loss focuses on identifying and correcting the relative permutations of neighboring pairs that have errors during arranging.
In other words, this list-wise loss is typically designed for our recursive permutation generation. 
Our theoretical analysis also reveals that the proposed framework is permutation-invariant regarding the candidate items (namely any permutation of the input items would not change the output ranking results).

In addition, we notice that 
conventional ranking metrics overlook the effects of contextual dependence among the items (i.e., the probability of a user favoring an item is affected by other items placed in the same list).
Therefore, we develop a new series of simulation (i.e., click models) based metrics, where NDCG can be regarded as special cases of using PBM simulations \citep{richardson2007predicting}. 

We conduct experiments on 2 learning-to-rank benchmark datasets and 3 real-world recommendation datasets.
Empirical results demonstrate that \textsf{STARank} can consistently outperform 9 the state-of-art baseline methods in terms of both conventional ranking metrics and new proposed simulation-based metrics.

\section{Preliminaries }
\subsection{Problem Formulation}
\label{sec:ranker}
LTR algorithms usually assume that each item $d$ has its utility in the context of a user (or a query) $q$\footnote{In the context of LTR \citep{liu2011learning}, the terms ``query'' and ``user'' normally appear in the search engines and the top-N recommendation scenarios respectively. For simplicity, we mainly use ``user'' in this paper.}, 
which is often modelled as the relevance score between $q$ and $d$.
Let $\mathcal{D}$ denote the set of candidate items and $\mathcal{D}_q$ denote the set of candidate items associated with $q$.
Besides these candidate items to be arranged (a.k.a., permuted), we would also have access to user $q$'s historical records, a sequence of the browsed item list, denoted as $\mathcal{H}_q$.
Notably, $\mathcal{D}_q$ and $\mathcal{H}_q$ are subsets of $\mathcal{D}$ (i.e., $\mathcal{D}_q\subseteq \mathcal{D}$, $\mathcal{H}_q \subseteq \mathcal{D}$), and there exists a permutation order in each $\mathcal{H}_q$ but no permutation order in each $\mathcal{D}_q$.
We further use $r_d$ and $r^*_d$ to denote the predicted and ground-truth relevance of $d$ respectively.

Prevailing ranking methods formulate the ranking problem as to predict $r$ for each item, whose risk function is defined as
\begin{equation}
\label{eqn:riskpoint}
\mathcal{F}_\mathtt{point} = \sum_{q} \sum_{d\in \mathcal{D}_q} \ell_\mathtt{point}\Big(\mathtt{f}_\mathtt{point}(d|\mathcal{H}_q), r_d^*\Big) 
\end{equation}
where $\ell_\mathtt{point}(\cdot,\cdot)$ denotes a point-wise loss function, $\mathtt{f}_\mathtt{point}(d|\mathcal{H}_q)$ is a ranker that makes prediction over one item $d$ in the context of $\mathcal{H}_q$.
In practice, following PRP, the trained ranker is used to assign an individual score for each item independently and sort the items according to their scores.  
However, this pipeline internally suffers from the following limitations:
(i) it lacks the consideration over the contextual dependence among items, as the score of each item is solely determined by itself;
(ii) it requires the sort operation, which brings the non-differentiable issue.

\begin{table}[t]
    \caption{A summary of notations associated with $\pi$. 
    }  
    \label{tab:notation}
    \vspace{-4mm}
    \begin{center}  
    \begin{tabular}{p{2cm}<{\centering}p{6cm}<{\centering}}
    \toprule
    \textbf{Notations} & \textbf{Explanations} \\
    \midrule
    $\pi_i \in\mathcal{D}_q$ & The item at the $i$-th position\\
    \midrule
    $\pi_{<i}\subseteq \mathcal{D}_q$ & A set of the items at the higher positions than $i$ \\
    \bottomrule
    \end{tabular}  
    \end{center} 
    \vspace{-4mm}
\end{table}

Instead, we formulate the ranking problem as to seek for the ground-truth permutation for the given set of candidate items.
Formally, let $\pi(q)$ and $\pi^{*}(q)$ denote the predicted permutation and ground-truth permutation of the given item set $\mathcal{D}_q$ respectively.
In other words, $\pi(q)$ is the sequence of the arranged items in $\mathcal{D}_q$. 
In the followings, we use $\pi$ and $\pi^{*}$ instead of $\pi(q)$ and $\pi^{*}(q)$ for convenience.
We list the key notations associated with $\pi$ in Table~\ref{tab:notation}.

Our set-to-arrangement ranking problem can be described as
\begin{definition}
[\textbf{\emph{Set-to-Arrangement Ranking Problem}}]
\label{def:set}
For each user/query $q$, given a tuple $(\mathcal{D}_q, \mathcal{H}_q,\pi^*)$ where $\mathcal{D}_q$ is the set of candidate items and $\pi^*$ is the ground-truth permutation of $q$, and $\mathcal{H}_q$ is the set of browsed items; our goal is to
learn a ranker $\mathtt{f}_\mathtt{set}(\mathcal{D}_q|\mathcal{H}_q)$ that generates the permutation $\pi$ for the given set of items $\mathcal{D}_q$ in the context of $\mathcal{H}_q$.
The risk function can be written as  
\begin{equation}
\label{eqn:riskset}
\mathcal{F}_\mathtt{set} = \sum_{q} \ell_\mathtt{list}\Big(\mathtt{f}_\mathtt{set}(\mathcal{D}_q|\mathcal{H}_q), \pi^*\Big),
\end{equation}
where $\ell_\mathtt{list}(\cdot,\cdot)$ is a list-wise loss function.
\end{definition}
We illustrate our task in Figure~\ref{fig:overview}(a).
We note that in our task, only the ground-truth permutation $\pi^*$ is accessible, while the ground-truth relevance score $r^*_d$ for each item $d$ is not available. 

\subsection{Connections to Related Work}
\label{sec:related}
LTR algorithms refer to a group of machine techniques designed to solve ranking problems that have been successfully applied in multiple areas including top-N recommendation \citep{duan2010empirical}, 
search engine \citep{joachims2002optimizing,liu2011learning}. 
These techniques can be broadly categorized as point-wise, pair-wise, and list-wise methods regarding the computation of training losses.
The point-wise methods \citep{friedman2001greedy} directly treat the ranking problems as classifications or regressions by taking one item as the input to predict its score. 
The pair-wise methods \citep{burges2005learning,joachims2006training} take a pair of items as the input and optimize their relative positions in the final ranked list.
The list-wise methods \citep{burges2010ranknet,cao2007learning,xia2008listwise} further extend the above methods by taking multiple items together and optimizing the ranking metrics.
In addition, there are recently proposed models to build multi-variant scoring functions for each item \citep{ai2018learning} or design permutation-invariant encoders \citep{pang2020setrank,lee2019set} or pointer network encoders \citep{bello2018seq2slate} for the input item set.
However, all the above methods following PBP fall into a paradigm of first individually scoring items and then putting them into correct positions through a non-differentiable sort operation \citep{xia2008listwise} or a surrogate loss based on permutation probability \citep{plackett1975analysis}.
In contrast, \textsf{STARank} directly learns to generate the permutation of the candidate items does not involve individually scoring items and the sort operation. 

\section{The STARank Framework}
\label{sec:framework}
\subsection{Overview}
In this section, we present the architecture design of $\mathtt{f}_\mathtt{set}(\cdot|\cdot)$ as follows.
We first describe our reader module (as shown in Figure~\ref{fig:overview}(b)) to generate the representations of the candidate items in the context of the browsed items in Section~\ref{sec:reader}.
Then, we introduce our arranger module (as shown in Figure~\ref{fig:overview}(c)) to produce the permutation of the given candidate items in Section~\ref{sec:writer}.
We also answer how to derive a our list-wise loss $\ell_\mathtt{list}(\mathtt{f}_\mathtt{set}(\cdot|\cdot),\pi^*)$ (as shown in Figure~\ref{fig:overview}(d)) in Section~\ref{sec:supervision}.

\subsection{Reader Module}
\label{sec:reader}
For each user/query $q$ there are two types of inputs: 
(i) $\mathcal{D}_q$ is a set of candidate items, which is expected to be encoded in a permutation-invariant manner.
In other words, any permutation of the input items should not change the output ranking.
(ii) $\mathcal{H}_q$ is a sequence of browsed items that have their natural orders and thus should be encoded in a permutation-sensitive fashion.
Namely, the positions of browsed items should be considered.
In the light of this, we construct $\mathtt{f}_\mathtt{set}(\mathcal{D}_q|\mathcal{H}_q)$ as follows. 

Let $\bm{x}_d$ denote the feature vector of item $d$.
We first introduce a permutation-sensitive (PS) module, denoted as $\mathtt{PS}(\cdot)$, to represent the user embedding in a recursive manner as 
\begin{equation}
\label{eqn:user}
\bm{u}_q = \mathtt{PS}\Big(\bm{u}_0,\bm{x}_1,\ldots,\bm{x}_{|\mathcal{H}_q|}\Big),
\end{equation}
where $\bm{u}_q$ is the embedding vector for user $q$.
For simplicity, we regard the user profiles (e.g., gender, age) as the original user feature, denoted as $\bm{u}_0$ which is fed into our PS module as shown in Figure~\ref{fig:overview}(b).
While practical, we adopt a recurrent neural network (RNN), i.e., an LSTM \citep{hochreiter1997long} to form the PS module.

Given $\bm{u}_q$, we then deploy a permutation-invariant (PI) module, denoted as $\mathtt{PI}(\cdot)$ to encode the candidate items in $\mathcal{D}_q$, which can be expressed as 
\begin{equation}
\label{eqn:reader}
\{\bm{h}_d|d\in\mathcal{D}_q\} = \mathtt{PI}\Big(\{\bm{x}_d|d\in\mathcal{D}_q\},\bm{u}_q\Big),
\end{equation}
where $\bm{h}_d$ is the representation vector of item $d$.
In particular, our PI module is based on an attention mechanism \citep{vaswani2017attention}, where the representation vector of each item $d$ is produced by
\begin{equation}
\label{eqn:h'}
\begin{aligned}
\bm{h}'_d &= \bm{w}_1^\top \mathtt{tanh}(\bm{W}_1 \bm{x}_d +\bm{b}_1), \ d \in \mathcal{D}_q,
\end{aligned}
\end{equation}
\begin{equation}
\label{eqn:h}
\begin{aligned}
\bm{h}_d &= \beta_d \bm{h}'_d, \text{ where }
\beta_d = \frac{\bm{h}'_d \bm{u}_q^\top}{\sum_{j\in\mathcal{D}_q}\bm{h}'_j \bm{u}_q^\top},
\end{aligned}
\end{equation}
where $\bm{w}_\cdot$, $\bm{W}_\cdot$, $\bm{b}_\cdot$ are trainable matrices and vectors.

\begin{figure*}[t]
	\centering
	\includegraphics[width=1.0\linewidth]{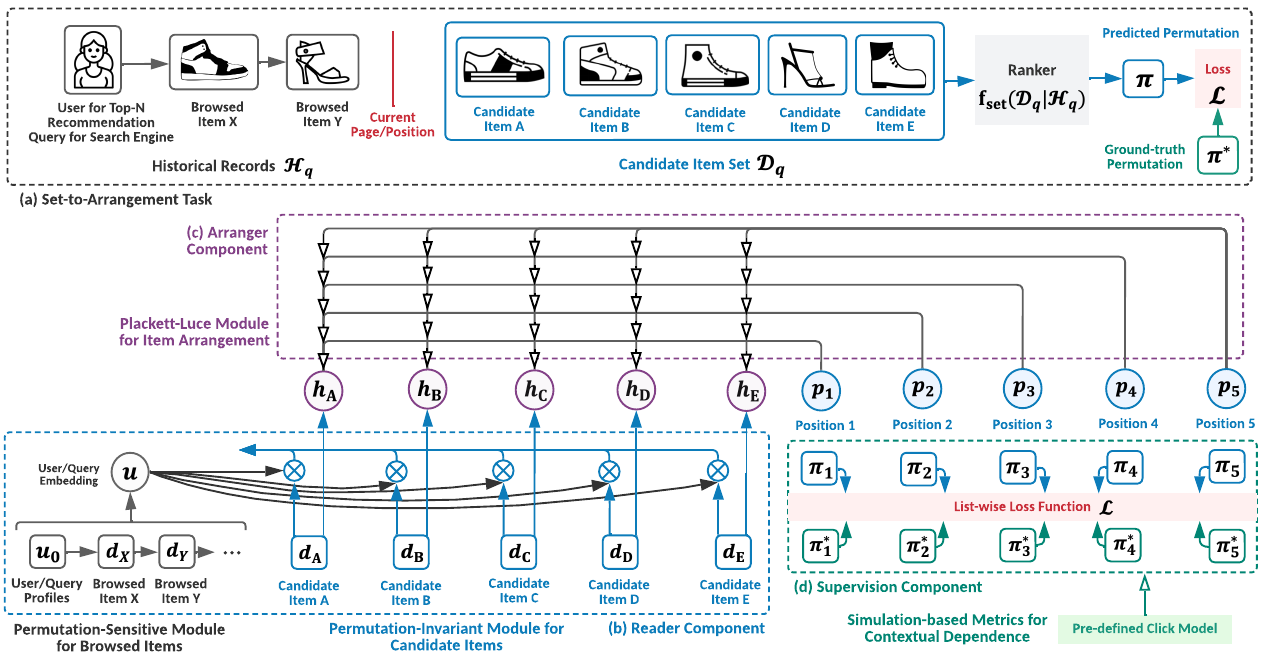}
	\vspace{-6mm}
	\caption {An illustrated example of \textsf{STARank}:
	(a) the task of \textsf{STARank} is to train a ranker $\mathtt{f}_\mathtt{set}(\mathcal{D}_q|\mathcal{H}_q)$ which produces the permutation $\pi$ of the candidate items $\mathcal{D}_q=\{d_\text{A},d_\text{B},d_\text{C},d_\text{D},d_\text{E}\}$ in the context of the user $q$ and the browsed items $\mathcal{H}_q=[d_\text{X},d_\text{Y}]$.
	Specifically, (b) \textsf{STARank} reads a set of candidate items into $\{\bm{h}_\text{A},\bm{h}_\text{B},\bm{h}_\text{C},\bm{h}_\text{D},\bm{h}_\text{E}\}$, where the browsed items together with the user profile are first encoded in a permutation-sensitive fashion, and then are fed into a permutation-invariant module to embed the candidate items.
	(c) \textsf{STARank} arranges the candidate items into their positions (namely generating the predicted permutation $\pi$) according to the attention scores of placing the item into the position that are learned via a Plackett-Luce module.
	(d) \textsf{STARank} is supervised by optimizing a list-wise loss function $\mathcal{L}$ over $\pi$ and the oracle permutation $\pi^*$ where $\pi^*$ is given.
    Additionally, we introduce a family of simulation-based ranking metrics.
	}
	\label{fig:overview}
\end{figure*}

By incorporating Eqs.~(\ref{eqn:user}), (\ref{eqn:h'}) and (\ref{eqn:h}) into Eq.~(\ref{eqn:reader}), we can obtain the representation vectors of items, being aware of the representations of the user and her browsed items. 

\subsection{Arranger Module}
\label{sec:writer}
With the representation vector of each input item, we use a Plackett-Luce (PL) module to generate their permutation.
Formally, the PL module, denoted as $\mathtt{PL}(\cdot)$, is defined as
\begin{equation}
\label{eqn:rho}
\pi = \mathtt{PL}\Big(\{\bm{h}_d|d\in\mathcal{D}_q\},\bm{u}_q\Big).
\end{equation}
Noticing that the original pointer network \citep{vinyals2015pointer} using LSTM as the encoder is permutation-sensitive, we tweak the pointer network by replacing the LSTM with the aforementioned reader.

More concretely, we construct a batch of one-hot position embedding vectors $\{\bm{p}_i\}_{i=1}^{|\mathcal{D}_q|}$ to encode the position information.
These embeddings can be obtained by applying a LSTM as the decoder.

Let $\pi_{i}$ denote the item at the $i$-th position, and $\pi_{<i}$ denote a set of the items whose positions are smaller than $i$, namely $\pi_{<i}=\{\pi_{1},\ldots,\pi_{i-1}\}$.
For convenience, we define $\pi_{<1}$ as $\emptyset$.

Then, the probability distribution of item $d$ at the $i$-th position can be produced as
\begin{equation}
\label{eqn:score}
s^i_{d}= \bm{v}^i_d \bm{u}_q^\top, \text{ where }\bm{v}^i_{d}=\bm{w}_{2}^\top \mathtt{tanh}(\bm{W}_2 \bm{h}_d + \bm{W}_3 \bm{p}_i + \bm{b}_{2}),
\end{equation}
\begin{equation}
\label{eqn:generate}
P(\pi_{i}=d|\pi_{<i})=  \left\{
	\begin{aligned}
	&e^{s^i_{d}}/\sum_{k\in \mathcal{D}_q\backslash\pi_{<i}} e^{s^i_{k}},  \text{ if } d \not\in \pi_{<i},\\
	&0, \text{ otherwise}.
	\end{aligned}
	\right.,
\end{equation}
where $i =1,\ldots,|\mathcal{D}_q|, d\in\mathcal{D}_q$;
$s^i_d$ is the trainable attention score associated with placing item $d$ at position $i$ and $p^i_d= P(\pi_{i}=d|\pi_{<i})$ is the probability representing the degree to which the PL module points item $d$ to  position $i$. 
$p^i_d$ is computed by a softmax over the remaining items and is set to 0 for items already been arranged.
Once item $d$ is placed at position $i$, its embedding $\bm{h}_d$ is fed as the input at the next step to produce the scores of position $i+1$.

We demonstrate the arrangement process in Figure~\ref{fig:overview}(c).
This way can ensure the module holds the information on the items already been arranged at each position to eventually output the predicted permutation $\pi$.

As Eq.~(\ref{eqn:generate}) shows the step-by-step generation of $\pi$, we follow Eq.~(\ref{eqn:p}) in \citep{hunter2004mm} to derive the predicted permutation $\pi$ as
\begin{equation}
\label{eqn:p}
P(\pi|\mathcal{D}_q) = \prod^{|\mathcal{D}_q|}_{i=1} \frac{e^{s^i_{\pi_i}}}{\sum^{|\mathcal{D}_q|}_{j=i}e^{s^j_{\pi_{j}}}},
\end{equation}
where $i=1,\ldots,|\mathcal{D}_q|$ and $s^i_{\pi_{i}}$ is computed by Eq.~(\ref{eqn:score}).
We apply Eq.~(\ref{eqn:p}) as $\mathtt{PL}(\cdot)$ in Eq.~(\ref{eqn:rho}).

\subsection{List-wise Loss Function}
\label{sec:supervision}
For each user/query $q$, we can compute the predicted permutation $\pi$, then one direct way to construct $\ell_\mathtt{list}(\pi,\pi^*)$ can be formulated as
$\ell_\mathtt{list}(\pi,\pi^*) = \sum^{|\mathcal{D}_q|}_{i=1}\ell_\mathtt{point}(\pi^*_{i},\pi_{i})$.
However, this formulation is a simple summation of point-wise loss that fails to consider the contextual dependence among items.

Notice that the generation of $\pi_i$ is conditioned on the arranged items (i.e., $\pi_{<i}$), as shown in Eq.~(\ref{eqn:generate}).
Therefore, the corresponding supervision of each $\pi_i$ should be $\pi^*_{i}|\pi_{<i}$, where $\pi^*|\pi_{<i}$ is the ground-truth permutation given the items in $\pi_{<i}$ have been arranged, and $\pi^*_{i}|\pi_{<i}$ is the item at position $i$ in $\pi^*|\pi_{<i}$.
In this regard, $\ell_\mathtt{list}(\pi,\pi^*)$ can be written as
\begin{equation}
\ell_\mathtt{list}(\pi, \pi^*) = \sum^{|\mathcal{D}_q|}_{i=1}\ell_\mathtt{point}(\pi^*_{i}|\pi_{<i},\pi_{i}|\pi_{<i}).
\end{equation}
However, $\pi^*|\pi_{<i}$ is not available, since we only have the ground-truth permutation $\pi^*$.
Alternatively, we construct the following surrogate loss to optimize the arrangement of position $i$ when the previous items are arranged following the ground-truth.
In this regard, $\ell_\mathtt{list}(\pi, \pi^*)$ can be written as
\begin{equation}
\label{eqn:lossseq}
\ell_\mathtt{list}(\pi, \pi^*) = \sum^{|\mathcal{D}_q|}_{i=1}\ell_\mathtt{point}(\pi^*_{i}|\pi^*_{<i},\pi_{i}|\pi^*_{<i}).
\end{equation}
However, calculating $\pi_{i}$ conditioned on $\pi^*_{<i}$ is still challenging.
Therefore, we leverage the internal consistency property of our arranger (i.e., PL models) to derive a computable version of $\ell_\mathtt{list}(\pi, \pi^*)$.

We derive the difference between $P(\pi_i|\pi^*_{<i})$ and $P(\pi^*_i|\pi^*_{<i})$ as follows.
\begin{equation}
\label{eqn:proof1}
\begin{aligned}
P(\pi_i|\pi^*_{<i}) = & \frac{P(\{\pi_i\}\cup\pi^*_{<i})}{P(\pi^*_{<i})}
=\frac{P(\{\pi_i\}\cup\{\pi^*_{i-1}\}\cup\pi^*_{<i-1})}{P(\pi^*_{<i})}\\
= &P(\{\pi_i\}\cup\{\pi^*_{i-1}\}|\pi^*_{<i-1})\cdot \frac{P(\pi^*_{<i-1})}{P(\pi^*_{<i})},
\end{aligned}
\end{equation}
where $P(\pi_i|\pi^*_{<i})$ denotes the probability of arranging $\pi_i$ at position $i$ conditioned on arranged items $\pi^*_{<i}$, and $P(\{\pi_i\}\cup\pi^*_{<i})$ denotes the probability of arranging $\pi^*_{<i}$ at positions from 1 to $i-1$ and arranging $\pi_i$ at position $i$. 
Eq.~(\ref{eqn:proof1}) tells us the main difference lies in $P(\{\pi_i\}\cup\{\pi^*_{i-1}\}|\pi^*_{<i-1})$ and $P(\{\pi^*_i\}\cup\{\pi^*_{i-1}\}|\pi^*_{<i-1})$.

We then formally describe the internal consistency property of PL models in the following lemma.
\begin{lemma}
\label{lemma:consistent}
Given a set of items $\mathcal{D}_q$ and one subset $\mathcal{D}'_q\subseteq \mathcal{D}_q$, the internal permutation of the items in $\mathcal{D}'_q$, denoted as $\pi'$, is consistent in the context of either $\mathcal{D}_q$ or $\mathcal{D}'_q$, which can be formulated as 
\begin{equation}
\label{eqn:lemma}
P(\pi'|\mathcal{D}_q) = P(\pi'|\mathcal{D}'_q),
\end{equation}
where $P(\pi'|\mathcal{D}_q)$ and $P(\pi'|\mathcal{D}'_q)$ are the probabilities of the items in $\mathcal{D}'_q$ satisfying $\pi'$ in the context of using the candidate item sets $\mathcal{D}_q$ and $\mathcal{D}'_q$ as inputs respectively.
\end{lemma}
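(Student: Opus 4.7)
The plan is to prove Lemma~\ref{lemma:consistent} by invoking the Luce-choice (independence of irrelevant alternatives) property of Plackett-Luce models, interpreting $P(\pi'|\mathcal{D}_q)$ on the left-hand side as the marginal probability that the full permutation of $\mathcal{D}_q$, when restricted to $\mathcal{D}'_q$, equals $\pi'$. Writing $\phi(d):=e^{s_d}$ for the item weight in Eq.~(\ref{eqn:p}) so that $P(\pi|\mathcal{D}_q)=\prod_{i=1}^{|\mathcal{D}_q|}\phi(\pi_i)/\sum_{j=i}^{|\mathcal{D}_q|}\phi(\pi_j)$, the PL generative model draws $\pi_1$ from $\mathcal{D}_q$ proportionally to $\phi(\cdot)$, then $\pi_2$ from the remaining items, and so on; the ratio structure of the consecutive conditional distributions is what makes the lemma work.

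First I would reduce the claim to a single-element case by induction on $|\mathcal{D}_q\setminus\mathcal{D}'_q|$. The base case $\mathcal{D}'_q=\mathcal{D}_q$ is immediate, so the inductive step amounts to showing $P(\pi'|\mathcal{D}'_q\cup\{y\})=P(\pi'|\mathcal{D}'_q)$ for any single extra element $y\notin\mathcal{D}'_q$. Writing $\pi'=(a_1,\ldots,a_m)$, I would expand the left-hand side as a sum over the $m+1$ possible insertion positions of $y$, apply Eq.~(\ref{eqn:p}) to each resulting length-$(m+1)$ permutation, and show that the sum telescopes to $\prod_{i=1}^{m}\phi(a_i)/\Phi_i$, where $\Phi_i:=\sum_{j=i}^{m}\phi(a_j)$, which is exactly $P(\pi'|\mathcal{D}'_q)$. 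The algebraic engine driving the collapse is the elementary identity $\frac{\phi(y)}{\phi(y)+\Phi_i}\cdot\frac{1}{\Phi_i}+\frac{1}{\phi(y)+\Phi_i}=\frac{1}{\Phi_i}$, which pairs the contributions of ``$y$ inserted just before $a_i$'' and ``$y$ inserted later than $a_i$'' and propagates the cancellation upward through the product.

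The main obstacle is the bookkeeping in this telescoping: each length-$(m+1)$ extension of $\pi'$ contributes a product whose denominators include $\phi(y)$ above the insertion point of $y$ and exclude it below, so collecting the $m+1$ resulting terms into one clean cancellation needs care. As a robust backup I would invoke the exponential-race representation of PL---draw independent $T_d\sim\mathrm{Exp}(\phi(d))$ for $d\in\mathcal{D}_q$ and take $\pi$ to be the ascending order of the $T_d$'s, which is a standard characterization---under which Eq.~(\ref{eqn:lemma}) follows in one line: the relative order within $\mathcal{D}'_q$ depends only on $\{T_d:d\in\mathcal{D}'_q\}$, which is independent of $\{T_d:d\in\mathcal{D}_q\setminus\mathcal{D}'_q\}$ and whose joint law is the same whether the ambient set is $\mathcal{D}_q$ or $\mathcal{D}'_q$.
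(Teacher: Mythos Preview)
Your proposal is correct, and both routes you sketch---the inductive telescoping over a single added element and the exponential-race (Gumbel/Yellott) coupling---are standard, valid derivations of the internal-consistency (Luce IIA) property for Plackett--Luce models.

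The paper, however, does not actually prove the lemma: immediately after stating it, the text simply remarks that the property is originally proposed in Hunter (2004) and refers the reader to Eq.~(27) there. So your proposal is substantially more detailed than the paper's treatment. If you want a self-contained argument, either of your two strategies works; the exponential-race version is the cleaner one-line proof and avoids the bookkeeping you flagged.

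One caveat worth making explicit: the lemma holds for the classical PL model with item-only weights $\phi(d)=e^{s_d}$, which is exactly the simplification you adopt and is what Hunter (2004) treats. The paper's Eq.~(\ref{eqn:p}) as written carries position-indexed scores $s^i_{\pi_i}$; taken literally, a position-dependent PL generally does \emph{not} satisfy Eq.~(\ref{eqn:lemma}). Your reduction to item-only weights is the right interpretation for the lemma to be true, and it matches the cited source, so you may want to state that assumption up front.
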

This property is originally proposed in \citep{hunter2004mm}, and  one can refer to Eq.~(27) in \citep{hunter2004mm} for the derivation of \textsc{Lemma~\ref{lemma:consistent}} (i.e., Eq.~(\ref{eqn:lemma})).

Then, by assigning $\mathcal{D}'_q=\{\pi_i,\pi^*_{i-1}\}$ and $\mathcal{D}_q=\{\pi_i\}\cup\pi^*_{<i}$, we can derive that $P(\{\pi_i\}\cup\{\pi^*_{i-1}\})=P(\{\pi_i\}\cup\{\pi^*_{i-1}\}|\pi^*_{<i-1})$, namely the relative permutation of $\pi_i$ and $\pi^*_{i-1}$ is independent of how positions from 1 to $i-2$ are arranged.
In other words, the main difference between $P(\pi_i|\pi^*_{<i})$ and $P(\pi^*_i|\pi^*_{<i})$, as derived in Eq.~(\ref{eqn:proof1}) exists in $P(\pi_i|\pi^*_{i-1})$ and $P(\pi^*_i|\pi^*_{i-1})$.

For convenience, let $\pi^*_0=\emptyset$.
Then, we re-write Eq.~(\ref{eqn:lossseq}) as
\begin{equation}
\label{eqn:proof2}
\ell_\mathtt{list}(\pi, \pi^*) = \sum^{|\mathcal{D}_q|}_{i=1}\ell_\mathtt{point}(\pi_{i}|\pi^*_{i-1},\pi^*_{i}|\pi^*_{i-1}).
\end{equation}
Eq.~(\ref{eqn:proof2}) establishes a list-wise approach to optimize \textsf{STARank} by supervising the arrangement at each position.
In practice, we formulate the overall loss as
\begin{equation}
\label{eqn:allloss}
\mathcal{L}=- \sum_q\sum^{|\mathcal{D}_q|}_{i=1} \log P(\pi_i=\pi^*_{i}|\pi^*_{i-1}),
\end{equation}
where $P(\pi_i=\pi^*_{i}|\pi^*_{i-1})$ is the probability of arranging $\pi^*_{i}$ after the arranged item $\pi^*_{i-1}$.

Compared to existing loss based on the ground-truth relevance scores, $\mathcal{L}$ pushes \textsf{STARank} to fit the ground-truth permutation directly in a position-by-position way, which corresponds to the recursive generation process in our arranger (as shown in Eq.~(\ref{eqn:generate})). 
In other words, $\mathcal{L}$ focuses on identifying and correcting the relative permutations of neighboring pairs that have errors during generating the arrangement.

\begin{algorithm}[t]
	\caption{\textsf{STARank}}
	\label{algo:framework}
	\begin{algorithmic}[1]
		\REQUIRE
		item set $\mathcal{D}_q$, historical data $\mathcal{H}_q$, ground-truth $\pi^*$.
		\ENSURE
		ranker $\mathtt{f}_\mathtt{set}(\cdot|\cdot)$, predicted permutation $\pi$.
		\vspace{1mm}
		\STATE Initialize all parameters.
		\REPEAT
		\FOR {each user (or query) $q$}
		\STATE Read $\mathcal{H}_q$, $\mathcal{D}_q$ into $\bm{u}_q$, $\{\bm{h}_d|d\in\mathcal{D}_q\}$ via Eqs.~(\ref{eqn:user}) and (\ref{eqn:reader}).
		\STATE Recursively arrange $\mathcal{D}_q$ into $\pi$  using Eq.~(\ref{eqn:rho}).
		\ENDFOR
		\STATE Update parameters by minimizing $\mathcal{L}$ using Eq.~(\ref{eqn:allloss}).
		\UNTIL convergence
	\end{algorithmic}
\end{algorithm}

\subsection{Model Analysis} 
\label{sec:analysis}

\minisection{Complexity.}
From Algorithm~\ref{algo:framework}, we can see that the main components of \textsf{STARank} are the reader component and the arranger component.

Let $L$ denote the length of candidate items of user/query $q$, then the time complexity of computing an attention score of placing an item is $O(LF_1F_2+EF_1)$, where $E$ is the number of input features, and $F_1$ and $F_2$ are the number of rows and columns of the attention matrix respectively.
The above process needs to iterate for each item and for each user/query.
Therefore, the overall complexity is $O(QL^2F_1F_2+QLEF_1)$ where $Q$ is the number of users/queries.

\minisection{Property.}
We theorize that \textsf{STARank} is permutation-invariant regarding the candidate items in the following lemma:
\begin{lemma}
\label{pro:analysis}
For the given item space $\mathcal{D}$, assume that its elements are finite, i.e., $|\mathcal{D}|<\infty$.
For each candidate item set $\mathcal{D}_q\subseteq \mathcal{D}$ and its corresponding browsed items $\mathcal{H}_q\subseteq \mathcal{D}$, the ranker $\mathtt{f}_\mathtt{set}(\mathcal{D}_q|\mathcal{H}_q)$ in Eq.~(\ref{eqn:riskset}) following \emph{\textsf{STARank}} is permutation-invariant to the elements in $\mathcal{D}_q$.
\end{lemma}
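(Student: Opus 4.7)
The plan is to decompose the ranker $\mathtt{f}_\mathtt{set}(\mathcal{D}_q|\mathcal{H}_q)$ according to the three pipeline stages---the permutation-sensitive reader producing $\bm{u}_q$, the permutation-invariant reader producing $\{\bm{h}_d\}_{d\in\mathcal{D}_q}$, and the Plackett--Luce arranger producing $\pi$---and to verify, module by module, that relabelling the elements of $\mathcal{D}_q$ leaves the output (or the distribution over outputs) unchanged. Formally, I would fix an arbitrary bijection $\sigma$ on $\mathcal{D}_q$ and show that presenting the items in the order $(\sigma(d))$ yields the same ranker output as presenting them in any other order. The finiteness hypothesis $|\mathcal{D}|<\infty$ is used only to ensure every set-indexed sum and softmax is well defined.

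First I would handle the reader. By Eq.~(\ref{eqn:user}), $\bm{u}_q$ is determined solely by $\bm{u}_0$ and the ordered sequence $\mathcal{H}_q$, so it is trivially unaffected by how $\mathcal{D}_q$ is indexed. For the permutation-invariant branch, Eq.~(\ref{eqn:h'}) computes $\bm{h}'_d$ purely from $\bm{x}_d$, and the attention weight $\beta_d$ in Eq.~(\ref{eqn:h}) normalises by $\sum_{j\in\mathcal{D}_q}\bm{h}'_j\bm{u}_q^\top$, a quantity summed over the \emph{set} $\mathcal{D}_q$ and therefore invariant under any reindexing. Hence $d\mapsto\bm{h}_d$ is determined entirely by $d$, the set $\mathcal{D}_q$, and $\bm{u}_q$, so the family $\{\bm{h}_d\}_{d\in\mathcal{D}_q}$ is the same regardless of input order.

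Next I would tackle the arranger by induction on the position index $i$. At every step the score $s^i_d$ in Eq.~(\ref{eqn:score}) depends only on $\bm{h}_d$, the position embedding $\bm{p}_i$, and $\bm{u}_q$, while the softmax in Eq.~(\ref{eqn:generate}) ranges over the set $\mathcal{D}_q\setminus\pi_{<i}$. Assuming inductively that the distribution of the prefix $\pi_{<i}$ is invariant under relabellings of $\mathcal{D}_q$, the conditional $P(\pi_i=d|\pi_{<i})$ is invariant as well, since it is assembled entirely from set-indexed quantities; chaining these step-wise invariances through Eq.~(\ref{eqn:p}) delivers invariance of the joint distribution $P(\pi|\mathcal{D}_q)$, and thus of the ranker output.

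I do not expect a substantive obstacle, since the architecture has deliberately placed every candidate-side operation either at the per-item level (Eqs.~(\ref{eqn:h'}) and~(\ref{eqn:score})) or inside a set-level aggregation (the sum in Eq.~(\ref{eqn:h}) and the softmax in Eq.~(\ref{eqn:generate})). The only real subtlety is notational: we must consistently distinguish the unordered set $\mathcal{D}_q$ from any concrete tuple used to represent it internally, and verify that no step of the computation secretly consumes that tuple's ordering. Once this bookkeeping is made explicit at each equation cited above, permutation-invariance follows immediately.
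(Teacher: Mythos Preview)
Your argument is correct and follows the same high-level decomposition into the $\mathtt{PS}$, $\mathtt{PI}$, and $\mathtt{PL}$ stages that the paper uses. The difference is in how invariance is actually discharged. You argue directly from the equations: each candidate-side computation is either per-item (Eqs.~(\ref{eqn:h'}), (\ref{eqn:score})) or a set-indexed aggregation (the normaliser in Eq.~(\ref{eqn:h}), the softmax domain in Eq.~(\ref{eqn:generate})), and you chain an induction on the position index through Eq.~(\ref{eqn:p}). The paper instead observes that $\mathtt{PI}$ is an attention-weighted summarisation over $\mathcal{D}_q$ and then appeals to the Deep~Sets characterisation (\textsc{Theorem}~2 in \citep{zaheer2017deep}) to conclude permutation-invariance in one stroke. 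Your route is more elementary and fully self-contained, and it makes the inductive structure of the arranger explicit; the paper's route is shorter but outsources the key step to an external result and leaves the $\mathtt{PL}$ stage somewhat implicit. Either proof is acceptable, and yours arguably exposes more clearly \emph{why} the architecture is invariant rather than simply certifying that it is.
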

\begin{proof}
We first re-express \textsf{STARank} as a function producing the predicted scores for all the possible permutations, namely $\mathtt{f}_\mathtt{set}:\mathcal{P}\rightarrow \mathbb{R}$ which can be specified as
\begin{equation}
P(\pi|\mathcal{D}_q,\mathcal{H}_q) = \mathtt{PL}\left(\mathtt{PI}(\mathcal{D}_q, \mathtt{PS}(\mathcal{H}_q)) \right).
\end{equation}
Here, as $\mathcal{H}_q$ is independent of $\mathcal{D}_q$, we can omit $\mathtt{PS}(\mathcal{H}_q)$ here.
As shown in Eq.~(\ref{eqn:h}), $\mathtt{PI}(\mathcal{D}_q)$ employing an attention mechanism can be regarded as the summarization of the items in $\mathcal{D}_q$ weighted by the attention scores.
And, the formulation of $\mathtt{PL}(\cdot)$ is provided in Eq.~(\ref{eqn:p}).
According to \textsc{Theorem}~2 in \citep{zaheer2017deep}, one can easily conclude that the whole \textsf{STARank} framework is permutation-invariant regarding the candidate items.
\end{proof}

\begin{table*}[t]
	\centering
	\caption{Comparison of different rankers on three industrial top-N recommendation datasets in terms of NDCG and MAP at positions 5, 10. 
	* indicates $p < 0.001$ in significance tests compared to the best baseline.
	}
	\vspace{-3mm}
	\resizebox{1.00\textwidth}{!}{
		\begin{tabular}{@{\extracolsep{4pt}}|c|cccc|cccc|cccc|}
			\toprule
			\multirow{2}{*}{Ranker} & \multicolumn{4}{c}{Tmall} |& \multicolumn{4}{c}{Alipay} |& \multicolumn{4}{c}{Taobao} |\\
			\cmidrule{2-5}
			\cmidrule{6-9}
			\cmidrule{10-13}
			{} & N@5 & N@10 & M@5 & M@10 & N@5 & N@10 & M@5 & M@10 & N@5 & N@10 & M@5 & M@10 \\
			\midrule
			FM & 
			0.1233 & 0.1140 & 0.2494 & 0.2572 & 
			0.1342 & 0.1277 & 0.2634 & 0.2709 & 
			0.1124 & 0.1106 & 0.2228 & 0.2354 \\
			DeepFM & 
			0.1241 & 0.1181 & 0.2515 & 0.2591 & 0.1276 & 0.1214 & 0.2551 & 0.2629 & 0.1117 & 0.1104 & 0.2209 & 0.2339 \\
			PNN & 
			0.1248 & 0.1185 & 0.2525 & 0.2602 & 0.1296 & 0.1227 & 0.2601 & 0.2666 & 0.1145 & 0.1123 & 0.2266 & 0.2385 \\
			LSTM & 
			0.1341 & 0.1287 & 0.2660 & 0.2716 & 0.1427 & 0.1348 & 0.2783 & 0.2816 & 0.1308 & 0.1284 & 0.2496 & 0.2587 \\
			GRU & 
			0.1311 & 0.1255 & 0.2606 & 0.2668 & 0.1422 & 0.1345 & 0.2773 & 0.2810 & 0.1300 & 0.1275 & 0.2488 & 0.2578 \\
			DIN & 
			0.1345 & 0.1305 & 0.2674 & 0.2725 & 0.1401 & 0.1365 & 0.2705 & 0.2751 & 0.1230 & 0.1227 & 0.2351 & 0.2457 \\
			DIEN & 
			0.1243 & 0.1182 & 0.2518 & 0.2594 & 0.1351 & 0.1293 & 0.2632 & 0.2696 & 0.1132 & 0.1122 & 0.2245 & 0.2366 \\
			SetRank & 
			0.2733 & 0.2640 & 0.4090 & 0.4284 & 
			0.2969 & 0.2850 & 0.4289 & 0.4458 & 
			0.2594 & 0.2498 & 0.3876 & 0.4062 \\
            Seq2Slate & 
			0.2385 & 0.2340 & 0.3720 & 0.3948 & 
			0.2452 & 0.2412 & 0.3771 & 0.3993 &
			0.2223 & 0.2180 & 0.3437 & 0.3690 \\
			\midrule
			\textsf{STARank}$^-_\mathtt{PI}$ & 
			0.3042 & 0.3251 & 0.4095 & 0.4254 & 
			0.3214 & 0.3264 & 0.4014 & 0.4446 & 
			0.2632 & 0.2734 & 0.3974 & 0.4214 \\
			\textsf{STARank}$^-_\mathtt{PS}$ & 
			0.2952 & 0.3052 & 0.3974 & 0.4035 & 
			0.3046 & 0.3035 & 0.4046 & 0.4256 & 
			0.2678 & 0.2742 & 0.4025 & 0.4264 \\
			\textbf{\textsf{STARank}} & 
			\textbf{0.3353}$^*$ & \textbf{0.3745}$^*$ & \textbf{0.4328}$^*$ & \textbf{0.4358}$^*$ & \textbf{0.4509}$^*$ & \textbf{0.4803}$^*$ & \textbf{0.5337}$^*$ & \textbf{0.5703}$^*$ & \textbf{0.3479}$^*$ & \textbf{0.4278}$^*$ & \textbf{0.4082}$^*$ & \textbf{0.4635}$^*$ \\
			\textbf{\textsf{STARank}$^+_\mathtt{PBM}$} & 
			\textbf{0.3591}$^*$ & \textbf{0.4311}$^*$ & \textbf{0.4472}$^*$ & \textbf{0.4932}$^*$ & \textbf{0.4725}$^*$ & \textbf{0.5046}$^*$ & \textbf{0.5402}$^*$ & \textbf{0.5854}$^*$ & \textbf{0.4006}$^*$ & \textbf{0.4868}$^*$ & \textbf{0.4683}$^*$ & \textbf{0.5250}$^*$ \\
			\textbf{\textsf{STARank}$^+_\mathtt{UBM}$} & 
			\textbf{0.3482}$^*$ & \textbf{0.4208}$^*$ & \textbf{0.4359}$^*$ & \textbf{0.4828}$^*$ & \textbf{0.4481}$^*$ & \textbf{0.4709}$^*$ & \textbf{0.5189}$^*$ & \textbf{0.5644}$^*$ & \textbf{0.3930}$^*$ & \textbf{0.4792}$^*$ & \textbf{0.4545}$^*$ & \textbf{0.5164}$^*$ \\
			\bottomrule
		\end{tabular}
	}
	\label{tab:res}
	\vspace{-2mm}
\end{table*}

\begin{table}[t]
	\centering
	\caption{Comparison of different rankers on  Yahoo dataset.
    * indicates $p < 0.001$ in significance tests.
	}
	\vspace{-3mm}
	\resizebox{0.85\linewidth}{!}{
		\begin{tabular}{@{\extracolsep{4pt}}|c|cccc|}
			\toprule
			\multirow{2}{*}{Ranker} & \multicolumn{4}{c}{Yahoo}|\\
			\cmidrule{2-5}
			{} & N@5 & N@10 & M@5 & M@10 \\
			\midrule
			FM & 
			0.2122 & 0.2434 & 0.3574 & 0.3629 \\
			DeepFM & 
			0.2483 & 0.2490 & 0.3587 & 0.3662 \\
			PNN & 
			0.2563 & 0.2693 & 0.3598 & 0.3602 \\
			LSTM & 
			0.2763 & 0.2731 & 0.3602 & 0.3641 \\
			GRU & 
			0.2901 & 0.2876. & 0.3674 & 0.3675 \\
			DIN & 
			0.2932 & 0.2975. & 0.3772 & 0.3706 \\
			DIEN & 
			0.2911 & 0.2983 & 0.3703 & 0.3688 \\
			SetRank & 
			0.3053 & 0.3185 & 0.3932 & 0.3868 \\
            Seq2Slate & 
			0.3121 & 0.3234 & 0.4001 & 0.3987 \\
			\midrule
			\textsf{STARank}$^-_\mathtt{PI}$ & 
			0.3188 & 0.3210 & 0.4079 & 0.4154 \\
			\textsf{STARank}$^-_\mathtt{PS}$ & 
			0.3110 & 0.3356 & 0.4139 & 0.4201 \\
			\textbf{\textsf{STARank}} & 
			\textbf{0.3399}$^*$ & \textbf{0.3525}$^*$ & \textbf{0.4301}$^*$ & \textbf{0.4398}$^*$ \\
			\textbf{\textsf{STARank}$^+_\mathtt{PBM}$} & 
			\textbf{0.3474}$^*$ & \textbf{0.3612}$^*$ & \textbf{0.4342}$^*$ & \textbf{0.4450}$^*$ \\
			\textbf{\textsf{STARank}$^+_\mathtt{UBM}$} & 
			\textbf{0.3355}$^*$ & \textbf{0.3543}$^*$ & \textbf{0.4324}$^*$ & \textbf{0.4463}$^*$ \\
			\bottomrule
		\end{tabular}
		}
	\vspace{-2mm}
	\label{tab:rank}
\end{table}

\begin{table}[t]
	\centering
	\caption{Comparison of different rankers on LETOR dataset.
    * indicates $p < 0.001$ in significance tests.
	}
	\vspace{-3mm}
	\resizebox{0.85\linewidth}{!}{
		\begin{tabular}{@{\extracolsep{4pt}}|c|cccc|}
			\toprule
			\multirow{2}{*}{Ranker} & \multicolumn{4}{c}{LETOR}|\\
			\cmidrule{2-5}
			{} & N@5 & N@10 & M@5 & M@10 \\
			\midrule
			FM & 
			0.1752 & 0.1721 & 0.3282 & 0.3292 \\
			DeepFM & 
			0.1746 & 0.1689 & 0.3243 & 0.3204 \\
			PNN & 
			0.1775 & 0.1739 & 0.3296 & 0.3301 \\
			LSTM & 
			0.1851 & 0.1795 & 0.3405 & 0.3384 \\
			GRU & 
			0.1739 & 0.1741. & 0.3170 & 0.3205 \\
			DIN & 
			0.1550 & 0.1671. & 0.2815 & 0.2988 \\
			DIEN & 
			0.1818 & 0.1811 & 0.3409 & 0.3402 \\
			SetRank & 
			0.2035 & 0.2246 & 0.4045 & 0.4002 \\
            Seq2Slate & 
			0.2436 & 0.2546 & 0.4122 & 0.4031 \\
			\midrule
			\textsf{STARank}$^-_\mathtt{PI}$ & 
			0.2443 & 0.2568 & 0.4172 & 0.4075 \\
			\textsf{STARank}$^-_\mathtt{PS}$ & 
			0.2467 & 0.2606 & 0.4231 & 0.4123 \\
			\textbf{\textsf{STARank}} & 
			\textbf{0.2824}$^*$ & \textbf{0.2622}$^*$ & \textbf{0.4876}$^*$ & \textbf{0.4529}$^*$ \\
			\textbf{\textsf{STARank}$^+_\mathtt{PBM}$} & 
			\textbf{0.2843}$^*$ & \textbf{0.2646}$^*$ & \textbf{0.4906}$^*$ & \textbf{0.4573}$^*$ \\
			\textbf{\textsf{STARank}$^+_\mathtt{UBM}$} & 
			\textbf{0.2863}$^*$ & \textbf{0.2668}$^*$ & \textbf{0.4887}$^*$ & \textbf{0.4587}$^*$ \\
			\bottomrule
		\end{tabular}
		}
	\vspace{-2mm}
	\label{tab:leto}
\end{table}

\begin{table*}[t]
	\centering
	\caption{Comparison of different rankers on three industrial top-N recommendation datasets in terms of the proposed simulation-based ranking metrics based on PBM and UBM at positions 5, 10.
	* indicates $p < 0.001$ in significance tests compared to the best baseline.
	}
	\vspace{-3mm}
	\label{tab:seq}
	\resizebox{1.00\textwidth}{!}{
		\begin{tabular}{@{\extracolsep{4pt}}|c|cccc|cccc|cccc|}
			\toprule
			\multirow{2}{*}{Ranker} & \multicolumn{4}{c}{Tmall} |& \multicolumn{4}{c}{Alipay} |& \multicolumn{4}{c}{Taobao} |\\
			\cmidrule{2-5}
			\cmidrule{6-9}
			\cmidrule{10-13}
			{} & $\mathtt{P}$@5 & $\mathtt{P}$@10 & $\mathtt{U}$@5 & $\mathtt{U}$@10 & $\mathtt{P}$@5 & $\mathtt{P}$@10 & $\mathtt{U}$@5 & $\mathtt{U}$@10 & $\mathtt{P}$@5 & $\mathtt{P}$@10 & $\mathtt{U}$@5 & $\mathtt{U}$@10 \\
			\midrule
			FM & 
			0.1026 & 0.1027 & 0.1015 & 0.1016 & 0.1138 & 0.1157 & 0.1013 & 0.1014 & 0.1073 & 0.1067 & 0.1013 & 0.1017 \\
			DeepFM & 
			0.1023 & 0.1023 & 0.1005 & 0.1008 & 0.1037 & 0.1040 & 0.1008 & 0.1009 & 0.1093 & 0.1097 & 0.1008 & 0.1015 \\
			PNN & 
			0.1248 & 0.1185 & 0.1014 & 0.1016 & o.1023 & 0.1021 & 0.1002 & 0.1005 & 0.1142 & 0.1155 & 0.101 & 0.1016 \\
			LSTM & 
			0.2074 & 0.2734 & 0.1301 & 0.1460 & 0.2137 & 0.2068 & 0.1274 & 0.1488 & 0.2498 & 0.2922 & 0.2025 & 0.2842 \\
			GRU & 
			0.2188 & 0.3066 & 0.1299 & 0.1437 & 0.2292 & 0.2208 & 0.1308 & 0.1512 & 0.2541 & 0.3023 & 0.1928 & 0.2747 \\
			DIN & 
			0.2725 & 0.2474 & 0.2493 & 0.1818 & 0.2257 & 0.2333 & 0.1443 & 0.1648 & 0.2748 & 0.2741 & 0.2099 & 0.2846 \\
			DIEN & 
			0.2520 & 0.3693 & 0.1014 & 0.1012 & 0.2366 & 0.2194 & 0.1281 & 0.1456 & 0.2345 & 0.2986 & 0.1891 & 0.2654 \\
			SetRank & 
			0.3778 & 0.3973 & 0.3378 & 0.3650 & 
			0.4171 & 0.4330 & 0.3346 & 0.3584 & 
			0.3094 & 0.3191 & 0.3704 & 0.3820 \\
            Seq2Slate & 
			0.3182 & 0.3455 & 0.3210 & 0.3482 & 
			0.3337 & 0.3600 & 0.3254 & 0.3533 &
			0.3285 & 0.3541 & 0.3189 & 0.3470 \\
			\midrule
			\textsf{STARank}$^-_\mathtt{PI}$ & 
			0.3962 & 0.3967 & 0.4346 & 0.4426 & 
			0.5025 & 0.5046 & 0.4325 & 0.4602 & 
			0.3429 & 0.3478 & 0.4385 & 0.4255 \\
			\textsf{STARank}$^-_\mathtt{PS}$ & 
			0.3873 & 0.3863 & 0.4253 & 0.4371 & 
			0.5010 & 0.5024 & 0.4242 & 0.4523 & 
			0.3321 & 0.3368 & 0.4257 & 0.4211 \\
			\textbf{\textsf{STARank}} & 
			\textbf{0.4240}$^*$ & \textbf{0.4010}$^*$ & \textbf{0.4435}$^*$ & \textbf{0.4493}$^*$ & \textbf{0.5283}$^*$ & \textbf{0.5058}$^*$ & \textbf{0.4578}$^*$ & \textbf{0.4888}$^*$ & \textbf{0.3529}$^*$ & \textbf{0.3599}$^*$ & \textbf{0.4418}$^*$ & \textbf{0.4308}$^*$ \\
			\bottomrule
		\end{tabular}
	}
\end{table*}

\section{Experiment}
\label{sec:exp}

\subsection{Datasets and Data Pre-processing}

\minisection{Dataset.}
We introduce 3 industrial recommendation datasets, namely Tmall, Taobao, and Alipay, and 2 learning-to-rank benchmark datasets, namely Yahoo and LETOR in our experiment. 
\begin{itemize}[topsep = 3pt,leftmargin =10pt]
\item \textbf{Tmall}\footnote{\url{https://tianchi.aliyun.com/dataset/dataDetail?dataId=42}} is a dataset consisting of 54,925,331 interactions of 424,170 users and 1,090,390 items.
These sequential histories are collected by the Tmall e-commerce platform from May 2015 to November 2015 with an average sequence length of 129 and 9 feature fields. 
\item \textbf{Taobao}\footnote{\url{https://tianchi.aliyun.com/dataset/dataDetail?dataId=649}} is a dataset containing 100,150,807 interactions of 987,994 users and 4,162,024 items.
These user behaviors including several behavior types (e.g., click, purchase, add to chart, item favoring) are collected from November 2007 to December 2007 with an average sequence length of 101 and 4 feature fields.
\item \textbf{Alipay}\footnote{\url{https://tianchi.aliyun.com/dataset/dataDetail?dataId=53}} is a dataset collected by Alipay, an online payment application from July 2015 to November 2015.
There are 35,179,371 interactions of 498,308 users and 2,200,191 items with an average sequence length of 70 and 6 feature fields.
\item \textbf{Yahoo}\footnote{\url{http://webscope.sandbox.yahoo.com}} is a dataset collected by Yahoo, an online search engine.
It consists of 29,921 queries and 710k items. 
Each query document pair is represented by a 700-dimensional feature vector manually assigned with a label denoting relevance at 5 levels \cite{chapelle2011yahoo}.
\item \textbf{LETOR}\footnote{\url{https://www.microsoft.com/en-us/research/project/letor-learning-rank-information-retrieval/}} is a package of benchmark datasets for research on LEarning TO Rank, which uses the Gov2 web page collection and two query sets from the Million Query track of TREC2007 and TREC2008.
We conduct experiments on MQ2007, one of two datsets for supervised learning-to-rank tasks in LETOR. MQ2007 contains 2,476 queries and 85K documents.
Each query-document pair is represented by a 46-dimensional feature vector with a manually assigned 3-level label \citep{qin2010letor}.
\end{itemize}

\minisection{Data Pre-processing}\footnote{We note that our setting on how to divide datasets is different from previous work such as SetRank \citep{pang2020setrank} which would make the results not consistent with their results reported in original papers, because our framework requires a sequence of items as the history (i.e., $\mathcal{H}_q$) and another set of items as the candidate items (i.e., $\mathcal{D}_q$).}.
We split each dataset using the timestep.
Let $T$ denote the length of user browsing logs.
We first filter out the data instances whose browsing logs are smaller than 30 (i.e., $T < 30$).
Next, for each user $q$, we assign 1-st to ($T$-20)-th items as the training dataset where 1-st to ($T-30$)-th items are used as the user's history data (i.e., $\mathcal{H}_q$) and ($T-29$)-th to ($T-20$)-th items construct the candidate item set (i.e., $\mathcal{D}_q$).
In validation, we use 1-st to ($T-20$)-th items as $\mathcal{H}_q$ and the ($T-19$)-th to ($T-10$)-th items as $\mathcal{D}_q$; while for evaluation, we use 1-st to ($T-10$)-th items as $\mathcal{H}_q$ and ($T-9$)-th to $T$-th items as $\mathcal{D}_q$.
For $\mathcal{H}_q$ and $\mathcal{D}_q$, we keep their original browsing orders and do not adopt any initial ranker to generate the initial ranking list.

\subsection{Experimental Configurations}
\minisection{Simulation-based Ranking Metrics.}
Consider that real-world datasets usually offer ground-truth relevance scores for candidate items, instead of ground-truth permutations. 
Therefore, we are required to first generate the ground-truth permutations.
For this purpose, we introduce a ranking metric, denoted as $\mathtt{R}(\cdot)$.
Let $\mathcal{P}$ denote the set of all the possible permutation of $\mathcal{D}_q$.
Then, we have $|\mathcal{P}|=\mathtt{A}^{|\mathcal{D}_q|}_{|\mathcal{D}_q|}=O(|\mathcal{D}_q|!)$ where $\mathtt{A}^\cdot_\cdot$ represents permutation operator.
After we enumerate all the possible permutations $\pi\in\mathcal{P}$, we select the one received the highest score in terms of $\mathtt{R}(\cdot)$ as the oracle permutation $\pi^*$.
In other words, we have: 
\begin{equation}
\label{eqn:oracle}
\pi^* = \text{arg} \mathop{\text{max}}_{\pi\in\mathcal{P}}\mathtt{R}(\pi).
\end{equation}
We note that conventional ranking metrics, such as NDCG, only consider the effects of the positions of the arranged items but overlook the contextual dependence among the items (i.e., the probability of a user favoring an item should be dependent on other items posted in the same sequence).

To this end, we propose to construct a virtual user to consider the contextual dependence and make the evaluation according to the user's feedbacks.
These virtual user models are known as the click models \citep{chuklin2015click}, which generate users' feedback by making some reasonable assumptions based on some heuristics or eye-tracking experimental results \cite{joachims2005accurately}.
Formally, we use $\mathtt{CM}$ to denote an arbitrary click model (CM) (e.g., position-based model (PBM) \citep{richardson2007predicting}, user browsing model (UBM) \citep{dupret2008user})
treated as a black block to generate the user's feedback for the input sequence.
Then, we introduce a new family of the ranking metrics based on the aforementioned simulations (i.e., click models), which is called simulation-based ranking metrics and defined as 
\begin{equation}
	\label{eqn:cm}
	\begin{aligned}
		\mathtt{R}_\mathtt{CM}(\pi)=  \sum^{|\mathcal{D}_q|}_{i=1}P(c_{\pi_i}|\pi_{<i};\mathtt{CM})
		=\sum^{|\mathcal{D}_q|}_{i=1} P(o_{\pi_i}|\pi_{<i};\mathtt{CM})\cdot P(r_{\pi_i}|\pi_{<i};\mathtt{CM}),
	\end{aligned}
\end{equation}
which reflects a simple fact that a user clicks an item $d$ (i.e., $P(c_d=1)$) only when it is both observed (i.e., $P(o_d=1)$) and perceived as relevant (i.e., $P(r_d=1)$),
and the observation probability and the relevance probability are determinated by the click model $\mathtt{CM}$.
We use PBM and UBM as examples for $\mathtt{CM}$ and introduce their detailed configurations as follows.
\textbf{PBM} \citep{richardson2007predicting} simulates user browsing behavior based on the assumption that the bias of a document only depends on its position, which can be formulated as $P(o_i) = \rho_i^\tau$, where $\rho_i$ represents 
position bias at position $i$ and $\tau \in [0, +\infty]$ is a parameter controlling the degree of position bias.
The position bias $\rho_i$ is obtained from an eye-tracking experiment in \citep{joachims2005accurately} and the parameter $\tau$ is set as 1 by default.
It also assumes that a user decides to click a document $d_i$ according to the probability $P(c_i) = P(o_i) \cdot P(r_i)$.
\textbf{UBM} \citep{dupret2008user} is an extension of the PBM model that has some elements of the cascade model.
The examination probability depends not only on the rank of an item $d_i$ but also on the rank of the previously clicked document $d_{i'}$ as $P(o_i=1|c_{i'}=1,c_{i'+1}=0,\ldots,c_{i-1}=0)=\upgamma_0$.
Similarly, we get $\upgamma_0$ from the eye-tracking experiments in \citep{joachims2005accurately,dupret2008user}. The click probability is determined by $P(c_i) = P(o_i) \cdot P(r_i)$. 

In this regard, NDCG metric is a special case of the proposed ranking metrics using PBM as $\mathtt{CM}$.
Concretely, the formulation of NDCG can be written as
\begin{equation}
\label{eqn:ndcg}
\mathtt{R}_\mathtt{NDCG}(\pi)=\frac{1}{N} \sum^{|\mathcal{D}_q|}_{i=1} (\frac{1}{\log_2{(i+1)}}) (2^{r_{\pi_i}}-1),
\end{equation}
where $N$ normalizes for the number of relevant items, and $r_{\pi_i}$ is the relevance of item $\pi_i$.
By comparing Eq.~(\ref{eqn:ndcg}) to Eq.~(\ref{eqn:cm}), we reveal that $\mathtt{R}_\mathtt{NDCG}(\cdot)$ is a special case of $\mathtt{R}_\mathtt{CM}(\cdot)$ by assigning
$P(o_{\pi_i}|\pi_{<i};\mathtt{CM})=\frac{1}{N_o}(\frac{1}{\log_2{(i+1)}}), P(r_{\pi_i}|\pi_{<i};\mathtt{CM})=\frac{1}{N_r}(2^{r_{\pi_i}}-1),$
where $N=N_o\cdot N_r$.
We can see that the observation probability of an arbitrary item is solely determined by its position, showing that CM here is a special case of PBM.

Noting that \textsf{STARank} metrics can be seamlessly adopted in the supervision generation, we can re-formulate Eq.~(\ref{eqn:oracle}) as
\begin{equation}
	\label{eqn:seqmetric}
	\pi^* = \text{arg} \mathop{\text{max}}_{\pi\in\mathcal{P}}\mathtt{R}_\mathtt{CM}(\pi).
\end{equation}

Note that the supervisions of \textsf{STARank} are determined by the ranking metrics.
For those datasets described in with binary scores in the browsing logs, there might exist multiple oracle permutations when applying either Eq.~(\ref{eqn:oracle}) or (\ref{eqn:seqmetric}).
In these cases, we randomly choose one oracle permutation as the supervision.
Also, these cases would occur during the evaluation phase, where we act similarly. 
We notice that there is a related paper \citep{sun2019re} investigating the impacts of different evaluation scheme regarding the same scores, which is out of the scope of this paper and we leave it for future work.


\minisection{Hyperparameter Setting.}
The learning rate is decreased from the initial value $1\times 10^{-2}$ to $1\times 10^{-6}$ during the training process.
The batch size is set as 100.
The weight for the L2 regularization term is $4\times 10^{-5}$.
The dimension of embedding vectors is set as 64.
The dropout rate is 0.5.
All experimental comparisons are conducted with 10 different random seeds.
All the models are trained under the same hardware settings with 6-Core AMD Ryzen 9 5950X (2.194GHZ), 62.78GB RAM, NVIDIA GeForce RTX 3080 cards.

\subsection{Baseline Methods and Ranking Metrics}
\label{sec:baseline}
\minisection{Baseline Descriptions.}
We compare \textsf{STARank} against 9 strong baselines, introduced as follows.
\textbf{FM} \citep{rendle2010factorization} is the factorization machine using the linear projection
and inner product of features to measure the user-item relevance.
\textbf{DeepFM} \citep{guo2017deepfm} is a generalized model consisting of a FM as a wide component and a MLP as a deep component.
\textbf{PNN} \citep{qu2016product} is product-based neural networks including an embedding layer and a product layer to model user-item interactions.
\textbf{DIN} \citep{zhou2018deep} designs a local activation unit to capture user interests
from historical records.
\textbf{LSTM} \citep{hochreiter1997long} is a standard long short memory approach widely
used for modeling user’s sequential pattern.
\textbf{GRU} \citep{hidasi2015session} uses the gate recurrent units to model sequential user
behaviors.
\textbf{DIEN} \citep{zhou2019deep} is an extension of DIN that builds an interest extractor
layer to model user’s temporal interests.
\textbf{SetRank} \citep{pang2020setrank} uses a stack of self-attention blocks based on set transformer \citep{lee2019set} to model the cross-item interactions.
\textbf{Seq2Slate} \citep{bello2018seq2slate} employs the pointer network \citep{qi2017pointnet} to sequentially
encode previously selected items and predict the next one.
All the methods are trained based on click feedback only.

\minisection{Evaluation Metrics.}
We evaluate the performance of these methods according to two series of ranking metrics, namely conventional ranking metrics and simulation-based ranking metrics built upon the specific click models.
To be more specific, for the former ones, we choose MAP (Mean Average Precision) and NDCG (Normalized Distributed Cumulative Gain) at position 5, 10, denoted as M@K and N@K where K is 5, 10.
For the later ones, we use PBM and UBM as the click models and compute the cumulative click probability at positions 5, 10, which are denoted as $\mathtt{P}$@K and $\mathtt{U}$@K and K is 5, 10.

Considering that the implementation of \textsf{STARank} is conditioned on which click model we choose.
Here, besides the standard implementation of \textsf{STARank} using NDCG as the metric, we further introduce two variants using PBM and UBM as the click models:
\begin{itemize}[topsep = 3pt,leftmargin =10pt]
\item \textbf{\textsf{STARank}} is the proposed framework, which uses NDCG over the whole ranking list to compute $\pi^*$ following Eq.~(\ref{eqn:oracle}).
\item \textbf{\textsf{STARank}$^+_\mathtt{PBM}$} is a variant of \textsf{STARank} using $\mathtt{R}_\mathtt{PBM}(\cdot)$ in Eq.~(\ref{eqn:cm}) to produce $\pi^*$ (i.e., PBM \citep{richardson2007predicting} as the click model).
\item \textbf{\textsf{STARank}$^+_\mathtt{UBM}$} is another variant of \textsf{STARank} using $\mathtt{R}_\mathtt{UBM}(\cdot)$ in Eq.~(\ref{eqn:cm}) to generate $\pi^*$ (i.e., UBM \citep{dupret2008user} as the click model).
\end{itemize}

We also develop the following variants to investigate the impact of each component of the proposed framework:
\begin{itemize}[topsep = 3pt,leftmargin =10pt]
\item \textbf{\textsf{STARank}$^-_\mathtt{PI}$} is a variant of \textsf{STARank} using multi-layer perceptron (MLP) layer as PI module to encode the input items instead of the attention network, where the user representation is concatenated with item embeddings to be fed into the MLP layer. 
\item \textbf{\textsf{STARank}$^-_\mathtt{PS}$}
is a variant of \textsf{STARank} using the MLP layer as PS module to encode the sequential records instead of the LSTM.
\end{itemize}

\subsection{Performance Comparison}
\label{sec:compare}
Tables~\ref{tab:res},~\ref{tab:rank}, \ref{tab:leto} and \ref{tab:seq} summarize the comparison results on the top-N recommendation datasets and learning-to-rank datasets, in terms of conventional ranking metrics and simulation-based ranking metrics.
Our major findings are listed as follows.

\minisection{\textsf{STARank} consistently outperforms all the baselines scoring and sorting.}
From these tables, we can clearly see that \textsf{STARank} outperforms all these baselines including classical tower-based methods (e.g., DeepFM), sequential methods (e.g., DIEN), and recently proposed set-to-sequence methods (e.g., SetRank) in terms of all the conventional and new proposed simulation-based ranking metrics. 
Since all these baselines are supervised by the relevance score of each item instead of directly optimizing the model by the permutation of the items in an end-to-end fashion, these results would indicate the superiority of the overall design of the proposed framework.

\minisection{\textsf{STARank} works well on the proposed simulation-based metrics.}
From comparisons between Tables~\ref{tab:res} and \ref{tab:seq}, we can see that there are larger performance gains from \textsf{STARank} in terms of the simulation-based metrics than the conventional configurations.
A possible explanation is that compared to the conventional ranking metrics, the new proposed simulation-based configurations consider the effects from the contextual dependence; and thus supervising the permutations of the ranking list would perform better in this case.

\minisection{SetRank and Seq2Slate consistently outperforms the other baselines.}
From these results, we also can observe that among 9 baseline methods, SetRank and Seq2Slate can consistently achieve the best performance.
One possible reason is that due to our experimental configuration typically assigns a set of items served as the historical records, then these methods can effectively leverage the contexts to better capture the user needs.


\begin{figure}[t]
	\centering
	\includegraphics[width=0.7\linewidth]{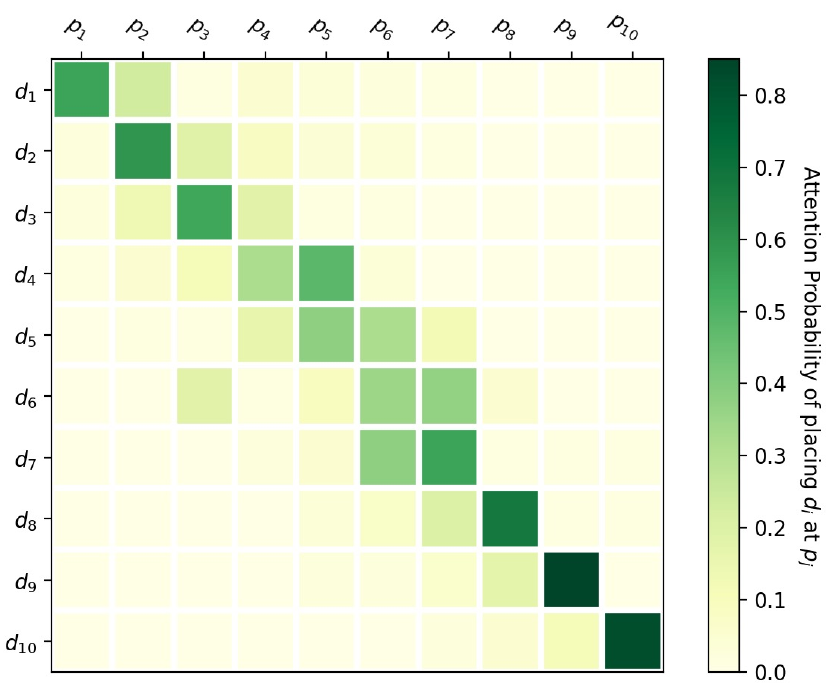}
	\vspace{-3mm}
	\caption{
	    Visualization of attention probabilities of placing item $d_i$ at position $p_j$ 
	    on Alipay dataset. 
	}
	\label{fig:attention}
	\vspace{-4mm}
\end{figure}

\subsection{Ablation Studies}
\minisection{Impact of Permutation-Invariant Module.}
To investigate the performance gain from our permutation-invariant module, we evaluate the performance of \textsf{STARank}$^-_\mathtt{PI}$ and report the corresponding results in Tables~\ref{tab:res}, \ref{tab:rank}, \ref{tab:leto}, and \ref{tab:seq}. 
From comparisons between \textsf{STARank} and \textsf{STARank}$^-_\mathtt{PI}$, the results demonstrate the superiority of our proposed attention networks introduced in Section~\ref{sec:reader} which incorporates user representation to produce the representation vector for each item.

\minisection{Impact of Permutation-Sensitive Module.}
We use \textsf{STARank}$^-_\mathtt{PS}$ to show the performance changes by replacing our RNN module with the MLP layer which is a permutation-invariant module and is not suitable to encode the permutation-sensitive user historical data. 
Results summarized in Tables~\ref{tab:res}, \ref{tab:rank}, \ref{tab:leto}, and \ref{tab:seq} verify the performance gain by using a PS module to model the user's history. 

\begin{figure}[t]
	\centering
	\includegraphics[width=0.78\linewidth]{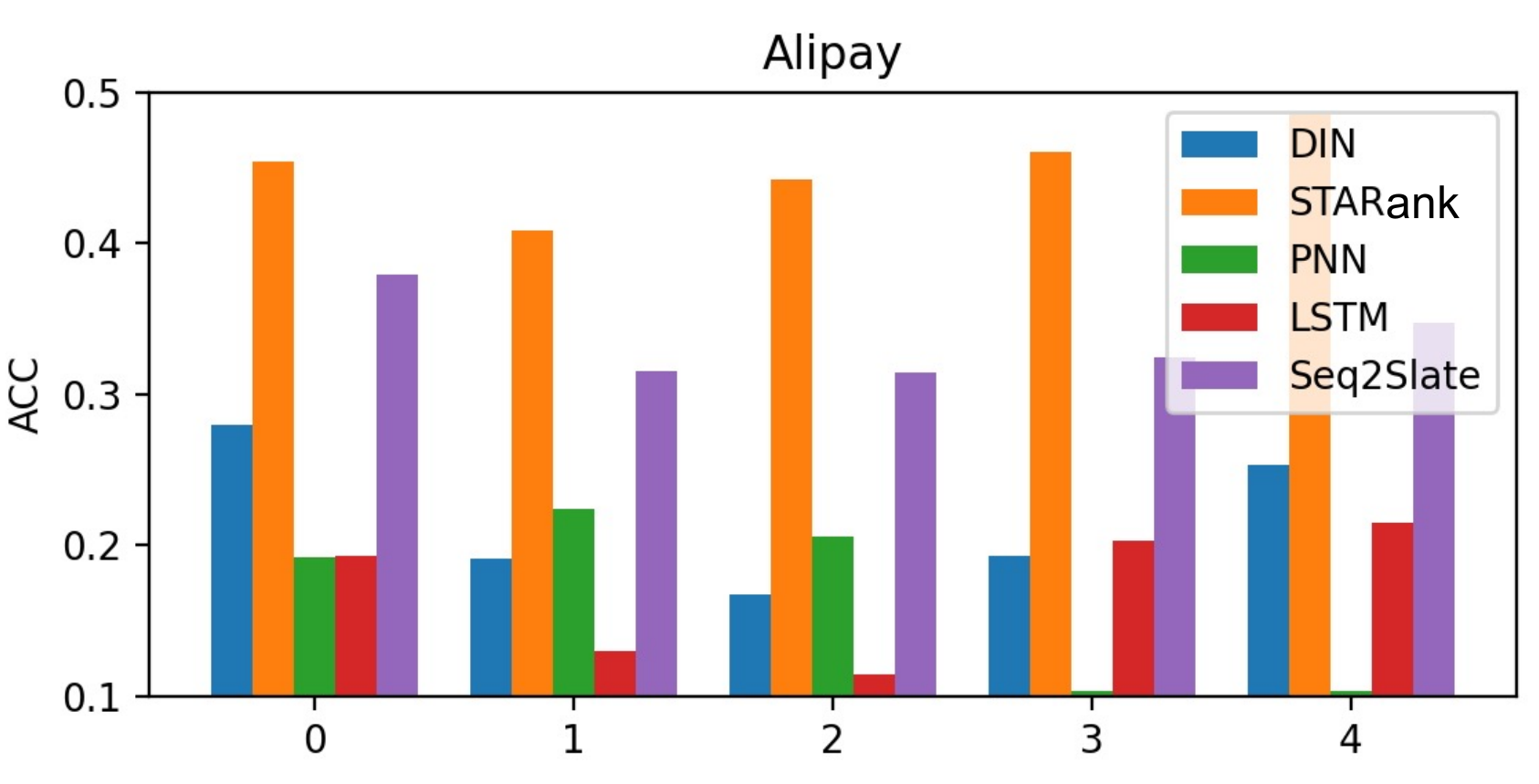}
	\vspace{-4mm}
	\caption{
	    Performance comparisons of \textsf{STARank} against baselines in terms of accuracy at top 5 positions on Alipay dataset. 
	}
	\label{fig:acc}
	\vspace{-1mm}
\end{figure}

\begin{figure}[t]
	\centering
	\includegraphics[width=0.78\linewidth]{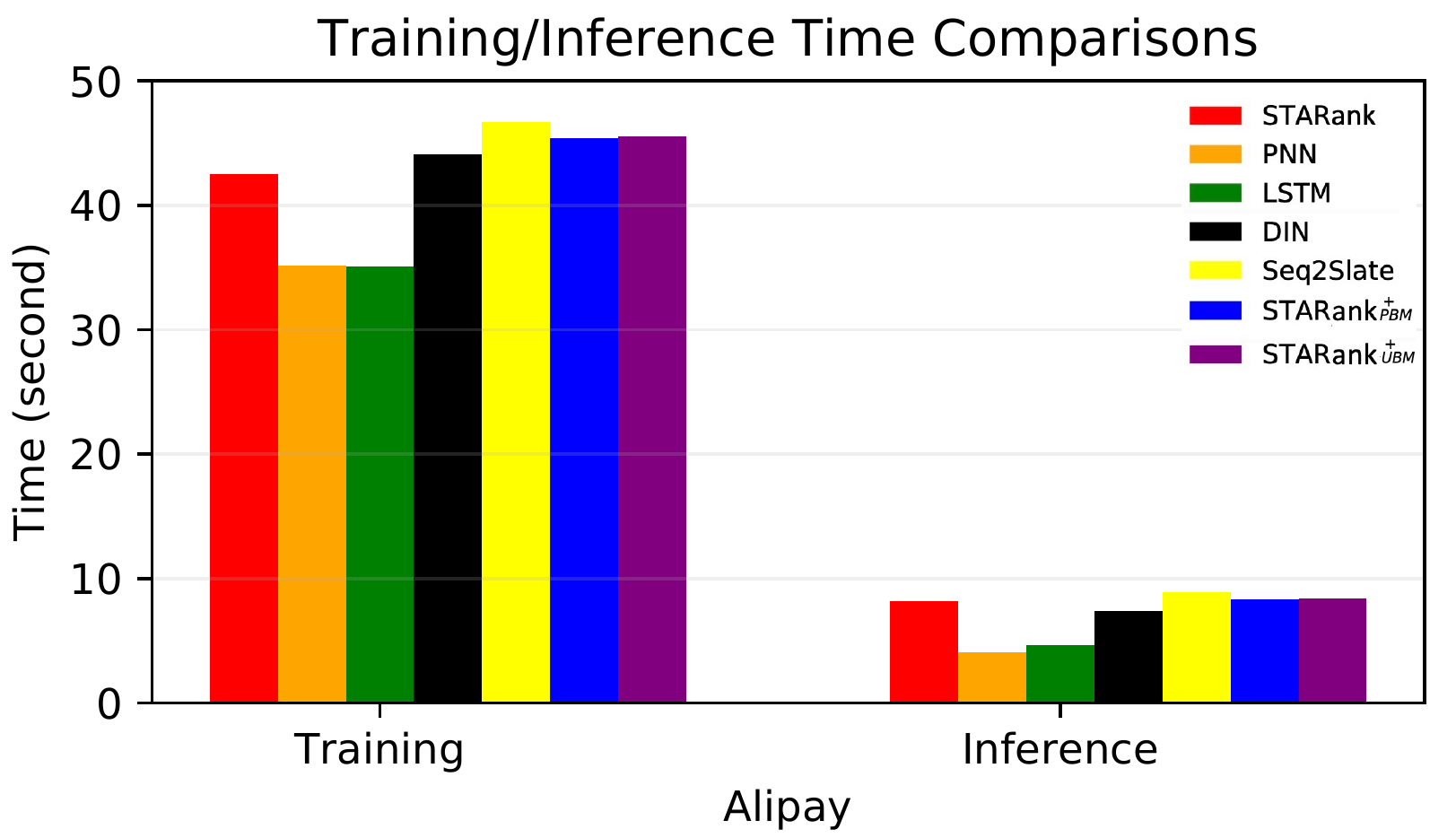}
	\vspace{-4mm}
	\caption{Comparisons of training and inference time of \textsf{STARank} and its variants against baselines on Alipay dataset. }
	\label{fig:time}
	\vspace{-2mm}
\end{figure}

\minisection{Impact of Supervision Generation.}
As described in the experimental configuration, the proposed simulation-based ranking metrics can not only be used for evaluation but also can be adopted in the ground-truth permutation generation (i.e., Eq.~(\ref{eqn:seqmetric})).
As Table~\ref{tab:seq} shows the experimental results of each model under these simulation-based ranking metrics, we further form the variants \textsf{STARank}$^+_\mathtt{PBM}$ and \textsf{STARank}$^+_\mathtt{UBM}$ whose supervision generations are simulation-based but evaluations are under conventional ranking metrics.
From the corresponding results shown in Table~\ref{tab:res}, we can see that PBM-based supervisions consistently improve the ranking performance, while UBM-based supervisions succeed in Tmall and Taobao datasets and fail in Alipay dataset.
One possible explanation is that these supervision signals encourage 
the model to encode the sequence dependence based on the assumptions of certain click models.
Therefore, if these assumptions hold in the dataset, the ranking performance would be improved; otherwise, the ranking performance would be hurt.


\subsection{In-Depth Analysis}
\minisection{Visualization Analysis.}
To better understand the behavior of the model, we visualize the probabilities of the attention scores from Eq.~(\ref{eqn:generate}), which shows the average probability of placing item $d_i$ (corresponding to $d$ in Eq.~(\ref{eqn:generate})) at position $p_j$ (corresponding to $\pi_i$ in Eq.~(\ref{eqn:generate})) at each step.
As Figure~\ref{fig:attention} depicts, the model produces the permutations that are close to the input order, but with the some items are placed to the neighboring positions.
A possible reason is that the input orders provided by Taobao platform are expected to achieve a good performance; and our method in this case can be regarded as a re-ranking model to refine the input orders.

Also, these attention values provide another perspective to explain why our framework performs better than existing ranking methods following PRP.
As revealed in \citep{dai2020u}, considering there are multiple bias in ranking, the maximization of the utility can be seen as solving the maximum-weight matching on the item-position bipartite graph, where the edge weight between an item and a position denotes the utility of placing the item at this position.
In this regard, our attention scores could be the estimation of the utility.
In contrast, PRP learning to assign individual score to each item, does not learn to estimate these the edge weights.

\minisection{Accuracy at Each Position.}
In order to precisely investigate the model performance at each position, we compute the accuracy (ACC) of the ranking list provided by PNN, LSTM, DIN, Seq2Slate, and \textsf{STARank}, and report the corresponding results on Alipay dataset in Figure~\ref{fig:acc}.
Given the predicted permutation $\pi$ and the oracle permutation $\pi^*$, for each data instance, ACC value equals 1 at position $i$ when $\pi_i=\pi^*_i$ holds; 0 otherwise.
From Figure~\ref{fig:acc}, we can see that our method can consistently outperform all these baseline methods at all the positions.
We also note that ACC values at the highest and lowest positions are higher than other positions.
One reasonable explanation is that the items that a user extremely favors or dislikes are much easier to distinguish than other items.


\subsection{Robustness Analysis}
We evaluate the robustness of \textsf{STARank} under different amounts of training data in Alipay dataset.
For comparison, we also employ LSTM as a baseline method.
Results are reported in Figure~\ref{fig:data}, which demonstrate that after reaching a threshold (60\% in this case), \textsf{STARank} is almost as robust as LSTM regarding the amounts of training data.


\section{Conclusion and Future Work}
In this paper, we present a novel set-to-arrangement framework named \textsf{STARank} to directly generate the permutations for input sets of items without the needs of scoring and sorting items.
Our architecture and supervision designs together enable \textsf{STARank} to be fully differentiable and also allow \textsf{STARank} to operate when only ground-truth permutations instead of ground-truth relevance scores for items are accessible. 
For future work, it would be interesting to investigate possible applications of the proposed framework in real-world e-commerce platforms.

\minisection{Acknowledgments.} 
    The work is partially supported by National Natural Science Foundation of China (No. 62177033) and Shanghai Artificial Intelligence Innovation and Development Fund (No. 2020-RGZN-02026).
    Jiarui Jin would like to thank Wu Wen Jun Honorary Doctoral Scholarship from AI Institute, Shanghai Jiao Tong University.

\clearpage
\bibliographystyle{ACM-Reference-Format}
\balance
\bibliography{main}

\end{document}